\newtheorem{theorem}{Theorem}[section]
\newtheorem {proposition}[theorem]{Proposition}
\newtheorem{corollary}[theorem]{Corollary}
\newtheorem{remark}[theorem]{Remark}
\newtheorem{example}[theorem]{Example}
\begin{document}
 
\markboth{Wojciech Czaja and James Tanis}{Kaczmarz Algorithm and Frames}

%%%%%%%%%%%%%%%%%%% Publisher's Area please ignore %%%%%%%%%%%%%%%%%%%%%%%
%\catchline{}{}{}{}{}
%%%%%%%%%%%%%%%%%%%%%%%%%%%%%%%%%%%%%%%%%%%%%%%%%%%%%%%%%%%%%%%%%%%%%%%%%%

\title{Kaczmarz Algorithm and Frames}

\author{Wojciech Czaja}

\address{Department of Mathematics, University of Maryland\\
College Park, MD 20742, USA\\
E-mail: wojtek@math.umd.edu}

\author{James H. Tanis}

\address{Department of Mathematics, Rice University\\
Houston, TX 77251, USA\\
E-mail: jtanis@rice.edu}

\maketitle

%\begin{history}
%\received{(Day Month Year)}
%\revised{(Day Month Year)}
%\accepted{(Day Month Year)}
%\comby{(xxxxxxxxxx)}
%\end{history}

\begin{abstract}
Sequences of unit vectors for which the Kaczmarz algorithm always converges in Hilbert space
 can be characterized in frame theory by tight frames with constant 1. We generalize this result to the context of frames and bases.  In particular, we show that the only effective sequences which are Riesz bases are orthonormal bases.   Moreover, we consider the infinite system of linear algebraic equations $A x = b$ and characterize the (bounded) matrices $A$ for which the Kaczmarz algorithm always converges to a solution.
\end{abstract}

%\keywords{Kaczmarz algorithm; Frame theory; Iterative algorithm.}

%\ccode{AMS Subject Classification: 22E46, 53C35, 57S20}

\section{Introduction}
In 1937 Stefan Kaczmarz \cite{K1}, \cite{K2}, introduced the following algorithm, known now as the {\it Kaczmarz algorithm} or {\it Kaczmarz method}, to solve a system of linear algebraic equations $A x = b$:

\noindent Let $a_{n}$ denote $n^{th}$ row of $A$, and let $x_{0} \in \mathbb{C}^{d}$. Define 
$$
x_{k + 1} = x_{k} + \frac{b_{i} - \langle x_{k}, a_i \rangle}{\|a_{i}\|^{2}}\,a_{i},
$$
where $k + 1 \equiv i \, (\text{mod} \,n)$.  If $A$ is of full rank, then:
$$
\lim_{k\to\infty} \|x_{k} - x\| = 0.
$$

This method has been rediscovered in the field of medical image processing, where it is used to reconstruct images in computed tomography and is called the Algebraic Reconstruction Technique (ART), see, e.g., \cite{GBH}, \cite{H}. Ever since, it has entered into many new research areas, such as crystallography, neural networks and parallel computing. Recently, the Kaczmarz algorithm has been combined with a randomization step, following on compressive sensing ideas, see \cite{CHJ}, \cite{SV1}, \cite{SV2}. We recommend \cite{C} for an updated list of publications involving the Kaczmarz algorithm.

McCormick \cite{McC} extended the Kaczmarz algorithm in 1977 to solve infinite systems of linear algebraic equations of the form $A x = b$, where $x, b$ are in Hilbert space.  His approach was based on an appropriate reduction of the infinite dimensional problem to a sequence of finite dimensional cases, where subiterations were performed on a sequence of increasing, finite dimensional subspaces.  

In 2001 Kwapie\'n and Mycielski \cite{KM} proposed a more straightforward version of the infinite dimensional Kaczmarz algorithm. Let $\mathbb{H}$ be a Hilbert space and let $\{e_{n}\}_{n=0}^{\infty}$ be a sequence of unit vectors in $\mathbb{H}$.  Given $x \in \mathbb{H}$, the Kaczmarz algorithm is defined as:
$$
x_{0} = \langle x, e_{0} \rangle \,e_{0},
$$
and
\begin{equation}\label{e20}
\forall\;  n\ge 1, \quad  x_{n} = x_{n - 1} + \langle x - x_{n - 1}, e_{n} \rangle e_{n}.
\end{equation}
The sequence $\{e_{n}\}_{n = 0}^{\infty}$ is called {\it effective} if and only if 
$$
\forall x \in\; \mathbb{H}, \quad \lim_{n\rightarrow \infty} x_{n} = x.
$$

Next, they introduced an algorithm that generates a sequence $\{g_n\}_{n = 0}^{\infty}$ from $\{e_n\}_{n = 0}^{\infty}$  with the property that $\{e_n\}$ is effective if and only if $\{g_n\}$ is a 1 - tight frame (see Theorem \ref{th1}).  In \cite{HS}, Haller - Szwarc later characterized effective sequences $\{e_n\}$, and therefore 1-tight frames $\{g_n\}$ containing a unit norm element, according to whether a certain matrix generated by $\{e_n\}$ is a partial isometry (see Theorem \ref{th2}).  

Until recently, the Kaczmarz algorithm has been considered to be a learning algorithm which allows for an infinite number of repetitive adjustments. This is clearly not optimal in many practical situations. As Strohmer and Vershynin have shown in their groundbreaking paper \cite{SV1}, a very different point of view can be taken, where a randomly selected subset of iterations is only considered. This reduces the repetitiveness of the algorithm, which, in turn, results in faster convergence. Usefulness of this approach has been further substantiated in \cite{CP}, \cite{N1}, \cite{N2}, \cite{N3}, \cite{N4}, \cite{N5}.

Motivated by these recent observations, we propose and analyze a different form of a constraint on the number of iterations in the Kaczmarz algorithm. This new constraint takes the form of Bessel property, which has been well studied in frame theory \cite{CR}. Associated with this we introduce a more flexible concept of almost effectiveness in order to study this correspondence in the context of frames, Riesz bases, and orthonormal bases. As an application we contribute to McCormick's work \cite{McC} on the infinite dimensional problem of solving $A x = b$ via the Kaczmarz algorithm.   

The paper is structured as follows: In Section \ref{s:2} we discuss the basics of frame theory and the algorithm introduced in \cite{KM} connecting the Kaczmarz algorithm with the theory of frames.  In Section \ref{s:3} we introduce and characterize almost effective sequences.  This enables us in Section \ref{s:4} to discuss the correspondence $\{e_n\}$ and $\{g_n\}$ in the context of frames, Riesz bases, and orthonormal bases.  In Section \ref{s:5} we consider these results in the context of the infinite dimensional problem of solving $A x = b$ via a single iteration of the Kaczmarz algorithm.

\section{Preliminary Results}
\label{s:2}

We start by introducing some basic terminology and notation that will be used throughout this paper.  These notions deal with the concept of redundant representations in Hilbert spaces. We say that a collection $\{ f_k : k \in \mathbb{N}\}\subset \mathbb{H}$ of vectors in a separable Hilbert space $\mathbb{H}$ is a {\it frame} for $\mathbb{H}$, if 
$$
\forall\; f \in \mathbb{H}, \quad
A\| f \|_2^2 \le \sum_{k\in \mathbb{N}} |\langle f,f_k \rangle|^2 \le B\| f \|_2^2,
$$
where $0 < A \leq B < \infty.$  Constants $A$ and $B$ which satisfy the above inequalities, are called, respectively, lower and upper frame bounds.  If $B < \infty$, then $\{f_k : k \in \mathbb{N}\}$ is a$\textit{ Bessel sequence}$.
We say that a frame $\{ f_k : k \in \mathbb{N}\}$ is {\it tight} if $A=B$, and a frame is called a {\it Riesz basis} if it is no longer a frame after removing of any of its elements. Riesz bases are also known as {\it exact frames}. Given any frame $\{ f_k : k\in \mathbb{N} \}$ for $\mathbb{H}$, there exists a {\it dual frame} $\{ {\tilde f}_k : k \in \mathbb{N} \}$ for $\mathbb{H}$ such that
\begin{equation}
\label{e1}
\forall\; f \in \mathbb{H}, \quad
f = \sum_{k\in \mathbb{N}} \langle f,f_k \rangle {\tilde f}_k
= \sum_{k \in \mathbb{N}} \langle f, {\tilde f}_k \rangle f_k,
\end{equation}
where the series converge in $\mathbb{H}$. The choice of coefficients for expressing $f$ in terms of  $\{ f_k : k \in \mathbb{N} \}$ or $\{ {\tilde f}_k : k \in \mathbb{N} \}$ is not unique, unless the frame is a {\it basis}. 
It is well known that {\it a frame is a basis if and only if it is exact}, see, e.g., \cite{BHW}.

Frames were introduced by Duffin and Schaeffer \cite{DS52} in 1952.  However, their practical potential has not been recognized until the 90's. We refer the interested reader to \cite{B92}, \cite{B94}, and \cite{CR}, for a more in depth treatment of frames and their constructions and applications. Since then, frames were, both, generalized and specialized, to allow for constructions of appropriately designed representation systems with varied features adapted to specific applications. 
This paper deals with one such special method of constructing frames. In 2001 Kwapie\'n and Mycielski \cite{KM} introduced the following sequence $\{g_{n}\}_{n=0}^{\infty} \subset \mathbb{H}$ to study effectiveness of the Kaczmarz algorithm:
$$
g_{0} = e_{0},
$$ 
and
\begin{equation}
\label{e2}
\forall \; n\in \mathbb{N}, \quad  g_{n} = e_{n} - \sum_{i = 0}^{n - 1} \langle e_{n}, e_{i}\rangle g_{i}.
\end{equation}
This specific construction allows us to write:
\begin{equation}
\label{e13}
x_n = \sum_{i=0}^n \langle x, g_i \rangle\, e_i.
\end{equation}
Kwapie\'n, Mycielski used this characterization to prove that 
\begin{equation}\label{eqn6}\|x\|^2 = \|x - x_n\|^2 + \sum_{n = 0}^{\infty} |\langle x, g_n\rangle|^2,
\end{equation}
This observation, in turn, leads to the following result.

\begin{theorem}[Kwapie\'n, Mycielski \cite{KM}]
\label{th1}
Let $\mathbb{H}$ be a separable Hilbert space and let $\{e_{n}\}_{n=0}^{\infty}$ be a sequence of unit vectors in $\mathbb{H}$. The sequence $\{e_{n}\}_{n = 0}^{\infty}$ is effective if and only if $\{g_{n}\}_{n = 0}^{\infty}$ constructed by means of (\ref{e2}) is a tight frame with constant 1 for $\mathbb{H}$.
\end{theorem}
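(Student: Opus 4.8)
The plan is to work directly with the key identity \eqref{eqn6}, namely
$$
\|x\|^2 = \|x - x_n\|^2 + \sum_{i=0}^{n} |\langle x, g_i\rangle|^2,
$$
which is stated in the excerpt as already established (it follows from \eqref{e13} together with the recursion \eqref{e2}, by a direct induction on $n$). First I would observe that this identity immediately shows that the partial sums $\sum_{i=0}^n |\langle x, g_i\rangle|^2$ are nondecreasing and bounded above by $\|x\|^2$, hence $\{g_n\}$ is automatically a Bessel sequence with bound $1$, for every sequence of unit vectors $\{e_n\}$. So the only thing at issue in the equivalence is the lower frame bound, and in fact the sharper statement that the sum equals $\|x\|^2$.

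For the forward direction, assume $\{e_n\}$ is effective, so $\|x - x_n\| \to 0$ for all $x$. Passing to the limit $n \to \infty$ in the identity above gives $\|x\|^2 = \sum_{i=0}^{\infty} |\langle x, g_i\rangle|^2$ for every $x \in \mathbb{H}$; this is exactly the statement that $\{g_n\}$ is a tight frame with constant $1$. (Here I should note that the series $\sum |\langle x, g_i\rangle|^2$ converges because of the monotone-and-bounded observation, so the limit on the right-hand side genuinely exists.) Conversely, suppose $\{g_n\}$ is a $1$-tight frame, i.e. $\sum_{i=0}^{\infty} |\langle x, g_i\rangle|^2 = \|x\|^2$ for all $x$. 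Then in the identity \eqref{eqn6}, as $n \to \infty$ the sum $\sum_{i=0}^{n} |\langle x, g_i\rangle|^2$ increases to $\|x\|^2$, forcing $\|x - x_n\|^2 \to 0$, i.e. $x_n \to x$. Since $x$ was arbitrary, $\{e_n\}$ is effective.

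The main obstacle — really the only nontrivial point — is establishing the identity \eqref{eqn6} itself, and in particular justifying that the finite identity $\|x\|^2 = \|x - x_n\|^2 + \sum_{i=0}^n |\langle x, g_i\rangle|^2$ holds for each fixed $n$. The cleanest route is induction: for $n = 0$ one checks directly from $x_0 = \langle x, e_0\rangle e_0 = \langle x, g_0 \rangle g_0$ (using $g_0 = e_0$ a unit vector) that $\|x\|^2 = \|x - x_0\|^2 + |\langle x, g_0\rangle|^2$, which is just the Pythagorean identity since $x - x_0 \perp e_0$. For the inductive step one uses \eqref{e13} to write $x_n - x_{n-1} = \langle x, g_n\rangle e_n$ and the recursion \eqref{e2} to show $\langle x - x_n, e_n \rangle = 0$, together with the fact that $\langle x - x_{n-1}, g_i \rangle = 0$ for $i < n$ (again by induction, since $x - x_{n-1}$ is orthogonal to $e_0, \dots, e_{n-1}$ and each $g_i$ with $i < n$ lies in $\operatorname{span}\{e_0,\dots,e_{n-1}\}$); expanding $\|x - x_{n-1}\|^2 = \|x - x_n\|^2 + |\langle x, g_n\rangle|^2$ via Pythagoras then closes the induction. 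Once \eqref{eqn6} is in hand, as the excerpt indicates it already is, the theorem is a two-line consequence as above; the whole content is in setting up the orthogonality bookkeeping for the identity.
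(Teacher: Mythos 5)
Your overall route is exactly the intended one, and it matches how the paper treats this result: Theorem \ref{th1} is quoted from Kwapie\'n--Mycielski, and the paper's only indication of proof is that the identity (\ref{eqn6}) (read, as you correctly do, with the finite sum $\sum_{i=0}^{n}|\langle x,g_i\rangle|^2$; the displayed infinite sum is a typo) ``leads to'' the theorem. Your limit argument in both directions from that identity, together with the observation that the partial sums are monotone and bounded by $\|x\|^2$, is correct and complete once (\ref{eqn6}) is established.

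There is, however, one false claim in your sketch of the induction for (\ref{eqn6}): you assert that $x - x_{n-1}$ is orthogonal to $e_0,\dots,e_{n-1}$ and hence that $\langle x - x_{n-1}, g_i\rangle = 0$ for $i<n$. This is not true in general; the Kaczmarz update only guarantees $\langle x - x_k, e_k\rangle = 0$ for the \emph{current} index, and orthogonality to earlier $e_i$ is destroyed at subsequent steps (e.g.\ $\langle x-x_1,e_0\rangle = -\langle x-x_0,e_1\rangle\langle e_1,e_0\rangle \neq 0$ whenever $\langle e_0,e_1\rangle\neq 0$ and $x-x_0\not\perp e_1$) --- indeed this loss of orthogonality is precisely why effectiveness is a nontrivial property. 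Fortunately the claim is superfluous: the inductive step needs only (i) $x - x_n \perp e_n$, which is immediate from (\ref{e20}), and (ii) the identification $\langle x - x_{n-1}, e_n\rangle = \langle x, g_n\rangle$, which follows from (\ref{e13}) at level $n-1$ combined with (\ref{e2}), since $\langle x,e_n\rangle - \sum_{i=0}^{n-1}\langle x,g_i\rangle\langle e_i,e_n\rangle = \langle x, e_n - \sum_{i=0}^{n-1}\langle e_n,e_i\rangle g_i\rangle$. With (i) and (ii), Pythagoras applied to $x - x_{n-1} = (x-x_n) + \langle x,g_n\rangle e_n$ gives $\|x-x_{n-1}\|^2 = \|x-x_n\|^2 + |\langle x,g_n\rangle|^2$, and summing telescopically (this also proves (\ref{e13}) by the same induction) yields (\ref{eqn6}). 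So delete the spurious orthogonality claim; the rest of your argument stands as written.
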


We note here the following relationship between sequences $\{e_{n}\}_{n = 0}^{\infty}$ and $\{g_{n}\}_{n = 0}^{\infty}$:
\begin{equation}
\label{e2new}
e_n = \sum_{i=0}^{n} \langle e_n, e_i \rangle g_i.
\end{equation}
Equivalently, this relationship can be expressed as follows. Let M be the lower triangular matrix defined as:
$$
M(i,j) = m_{ij} = \begin{cases}\langle e_i, e_j \rangle & i>j,\\
1 & i=j,\\ 0 & i < j, \end{cases}
$$
and let $C = M^{-1}$ be the algebraic inverse of $M$ with coefficients defined by:
$$
C(i,j) = \begin{cases} c_{ij} & i>j,\\
1 & i=j,\\ 0 & i \le j. \end{cases}
$$
Then, we have that: 
\begin{equation}\label{e3}
g_n = e_n + \sum_{i=0}^{n-1}c_{n i}e_i.
\end{equation}

In \cite{HS} (see also \cite{H}), Haller and Szwarc obtained the following different characterization of effective sequences, which we will utilize in Section \ref{s:4}.

\begin{theorem}[Haller, Szwarc \cite{HS}]
\label{th2}
Let $\mathbb H$ and $\{e_n\}$ be as before. The sequence $\{e_{n}\}_{n = 0}^{\infty}$ is effective if and only if it is linearly dense and $C - I$ is a partial isometry, i.e., $(C-I)^*(C-I)$ is an orthogonal projection.
\end{theorem}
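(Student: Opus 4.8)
The plan is to translate Theorem~\ref{th1} into a statement about the Gram matrix of $\{g_n\}$ and then to evaluate that Gram matrix in terms of $C$, whereupon the partial isometry condition falls out. Two preliminary remarks set the stage. Since the triangular relations (\ref{e2}) and (\ref{e2new}) have unit diagonal, $\operatorname{span}\{e_0,\dots,e_N\}=\operatorname{span}\{g_0,\dots,g_N\}$ for every $N$; hence $\overline{\operatorname{span}}\{e_n\}=\overline{\operatorname{span}}\{g_n\}$, and $\{e_n\}$ is linearly dense if and only if $\{g_n\}$ is. Also, it follows from (\ref{eqn6}) that $\{g_n\}$ is always a Bessel sequence with bound $1$; consequently the synthesis operator $T\colon\ell^{2}\to\mathbb H$, $T\delta_n=g_n$, is bounded with $\|T\|\le1$, and the Gram matrix $\mathcal G=\bigl(\langle g_m,g_n\rangle\bigr)_{m,n}$ of $\{g_n\}$ --- the entrywise complex conjugate of the operator $T^{*}T$ --- is a bounded positive operator on $\ell^{2}$ with $\mathcal G\le I$.

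Next recall the standard fact that a linearly dense Bessel sequence is a Parseval frame if and only if $T^{*}T$ (equivalently $\mathcal G$) is an orthogonal projection: if the frame operator $TT^{*}$ equals $I$ then $T^{*}T=T^{*}(TT^{*})T$ is a self-adjoint idempotent, while conversely, if $T^{*}T$ is an orthogonal projection then $T$ is a partial isometry, hence has closed range, and linear density forces $\operatorname{ran}T=\mathbb H$, i.e.\ $TT^{*}=I$. Combining this with Theorem~\ref{th1}, $\{e_n\}$ is effective if and only if $\{e_n\}$ is linearly dense and $\mathcal G$ is an orthogonal projection.

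It remains to compute $\mathcal G$. Writing $G=\bigl(\langle e_i,e_j\rangle\bigr)_{i,j}$ for the Gram matrix of $\{e_n\}$, relation (\ref{e3}) gives $\mathcal G=CGC^{*}$ as an entrywise identity (each entry is a finite sum, $C$ being lower triangular). Since $G=M+M^{*}-I$ and, as algebraic identities of triangular matrices, $CM=M^{*}C^{*}=I$, this collapses to
\begin{equation*}
\mathcal G=C+C^{*}-CC^{*},\qquad\text{equivalently}\qquad I-\mathcal G=(C-I)(C-I)^{*}.
\end{equation*}
To upgrade this to an identity of bounded operators it suffices that $C-I$ be bounded, which is automatic: for finitely supported $a$ one computes $\|(C^{*}-I)a\|^{2}=\langle(I-\mathcal G)a,a\rangle\le\|a\|^{2}$, using $0\le\mathcal G\le I$, so $C^{*}-I$, and hence $C-I$, extends to a bounded operator on $\ell^{2}$ of norm $\le1$, and then $(C-I)(C-I)^{*}=I-\mathcal G$ as bounded operators. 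Therefore $\mathcal G$ is an orthogonal projection if and only if $I-\mathcal G=(C-I)(C-I)^{*}$ is, if and only if $C-I$ is a partial isometry, if and only if $(C-I)^{*}(C-I)$ is an orthogonal projection; together with the previous paragraph, this is the assertion.

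The parts I expect to be routine are the frame/operator bookkeeping --- analysis and synthesis operators, partial isometries, the Gram-matrix characterization of Parseval frames --- together with the transpose/conjugate conventions hidden in writing $\mathcal G$ as a matrix built from $C$. The step that genuinely requires care is the reduction of the formal matrix manipulations $CM=M^{*}C^{*}=I$ and $\mathcal G=CGC^{*}$ to honest identities of bounded operators on $\ell^{2}$: the infinite matrices $M$ and $C$ need not be bounded, and one must check that the combination $C-I$ on which the statement actually rests is bounded --- for which the Bessel bound for $\{g_n\}$ is precisely what is needed.
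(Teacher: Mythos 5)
Your argument is correct, but note that the paper itself offers no proof of this statement: Theorem \ref{th2} is quoted from Haller--Szwarc \cite{HS}, so there is no internal proof to compare against. What you have written is essentially a reconstruction of the Haller--Szwarc route, and it hangs together: the Bessel bound $1$ for $\{g_n\}$ from (\ref{eqn6}), the entrywise identity $\mathcal G = CGC^{*}$ with $G = M + M^{*} - I$ and the triangular cancellations $CM = M^{*}C^{*} = I$ (all finite sums), giving $I-\mathcal G = (C-I)(C-I)^{*}$; this is exactly the conjugate-transpose form of the identity $C^{*}GC = I - U^{*}U$ that the paper itself quotes from \cite[Prop.~1]{HS} in the proof of Theorem \ref{th10}, and your boundedness step recovers the estimate $\|C-I\|\le 1$ from that same proposition rather than assuming it. Combining with Theorem \ref{th1} via the standard fact that a complete Bessel sequence is Parseval iff its Gram matrix is an orthogonal projection then yields the statement. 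Two small points, both harmless and correctly flagged by you: the Gram matrix in your convention is the entrywise conjugate of $T^{*}T$, which does not affect positivity, the bound $\le I$, or being a projection; and you land on $(C-I)(C-I)^{*}$ rather than the $(C-I)^{*}(C-I)$ of the statement, which is equivalent since either product being an orthogonal projection is the same as $C-I$ being a partial isometry. The one step that genuinely needed care --- passing from formal triangular-matrix algebra to bounded-operator identities when $M$ and $C$ may be unbounded --- is handled correctly by proving boundedness of $C-I$ first.
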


Proposition 1 in \cite{HS} proves the estimate $\|C - I\| \leq 1$, so that Theorem \ref{th2} can be interpreted as showing that effective sequences are as prevalent among sequences of unit vectors as partial isometries among strictly lower triangular contractions.%, and it provides a method for constructing effective sequences.

\begin{example} If $\{e_{n}\}_{n = 0}^{\infty}$ is an orthonormal basis for $\mathbb{H}$, then $g_n = e_n$. Hence, by Theorem 1, $\{e_{n}\}_{n = 0}^{\infty}$ is an effective sequence.
\end{example}

In view of Theorem \ref{th1}, and in view of (\ref{e1}) and (\ref{e2new}), it is natural to ask if $\{g_n\}$ is a dual frame of $\{e_n\}$. Such questions about duality arise naturally in the context of wavelet and Gabor frames \cite{E}, \cite{STB}.  
In Section \ref{s:4} we address affirmatively the question of whether the sequence $\{g_n\}$ is a frame.

The heart of our approach is the observation that the Bessel map of $\{g_n\}$ and that of $\{e_n\}$ are linked by the matrix $\overline{C} := (C^t)^*$.  Let $S_g = L_g^* L_g$ be the frame operator of $\{g_n\}$ with Bessel map $L_g$ and let $S_e$ be the frame operator of $\{e_n\}$ with Bessel map $L_e$.  Then we formally have
\begin{center}
$L_{g} x = \left(
\begin{array}{r} 
\langle x, g_{0} \rangle \\ 
\\
\langle x, g_{1} \rangle\\
\\
\langle x, g_{2} \rangle \\
\vdots 
\end{array}
\right) = 
\left(
\begin{array}{r} 
\langle x, e_{0} \rangle \\ 
\\
\sum_{i = 0}^{1} \overline{c}_{1 i} \langle x, e_{i} \rangle\\
\\
\sum_{i = 0}^{2} \overline{c}_{2 i} \langle x, e_{i} \rangle \\
\vdots 
\end{array}
\right)$= $\overline{C} L_{e} x$.
\end{center}

Theorems \ref{th8} and \ref{th9} describe a form of a frame-theoretic duality between the sequences $\{e_n\}$ and $\{g_n\}$.  However, the following remark shows these sequences are generally \emph{not} dual frames in the conventional sense.

\begin{remark}
The sequences $\{e_{n}\}_{n = 0}^{\infty}$ and $\{g_{n}\}_{n = 0}^{\infty}$ need not be dual frames of each other.
\end{remark}

\begin{proof}
Our argument is constructive. Let  $\{e_{n}\}_{n = 2}^{\infty}$ be the canonical orthonormal basis for $\overline{\textrm{span}}\{e_{0}, e_{1}\}^{\bot}$ and let $\langle e_{0}, e_{1}\rangle = \frac{1}{2}$.  Then, we conclude that $M = \left(\begin{array}{rrrr}
1 & & & \\
\frac{1}{2} & 1 & & \\
0 & 0 &1 & \\
0 & 0 & 0 & 1 \ddots \\
\end{array}\right)$, which, in turn, implies that 
$$
\overline{C} = C = \left(\begin{array}{rrrr}
1 & & & \\
-\frac{1}{2} & 1 & & \\
0 & 0 &1 & \\
0 & 0 & 0 & 1 \ddots\\
\end{array}\right).
$$

Note that $M$ is bounded and $\{e_{n}\}$ is a frame.  Under these conditions, Theorem \ref{th8} shows that $\{g_{n}\}$ is also a frame. We recall from the definition that if $\{e_{n}\}_{n = 0}^{\infty}$ and $\{g_{n}\}_{n = 0}^{\infty}$ were to form a dual pair, then the following would hold for every $f\in\ell^{2}(\mathbb{N})$:
$$
f = \sum_{n = 0}^{\infty}\langle f, g_{n}\rangle e_{n} = L_{e}\textrm{*}L_{g} f = L_{e}\textrm{*} \overline{C} L_{e} f.
$$  
However, $L_{e}\textrm{*} \overline{C} L_{e} e_{0} = L_{e}\textrm{*} \overline{C}(1, \frac{1}{2}, 0, \ldots)^{T} = L_{e}$*$(1, -\frac{1}{2}, 0, \ldots )^{T} = e_{0} - \frac{1}{2}e_{1} \neq e_{0}$.  Hence we have a contradiction.
\end{proof}

In particular, this example shows that even if we give $\{e_{n}\}_{n = 0}^{\infty}$ and / or $\{g_{n}\}_{n = 0}^{\infty}$ the extra structure of Riesz bases, they still may not be dual frames of each other.

\section{Almost Effective Sequences}
\label{s:3}

By considering frames $\{g_n\}_{n = 0}^{\infty}$ that are $\textit{not}$ tight, we are led through formula $\ref{eqn6}$ to define a sequence $\{e_{n}\}_{n = 0}^{\infty}$ to be \textit{almost effective} if there exists some $0 \leq B < 1$ such that 
$$\forall x\in H, \ \ \lim_{n \rightarrow \infty} \|x - x_{n}\|^{2} \leq B \|x\|^{2}.$$  

The condition of almost effectiveness is not only theoretically motivated by the concept of frames and frame inequalities, but it also arises naturally via the notions of numerical approximation and thresholding. In our opinion, a direct exploitation of this relaxation of convergence may lead to improvements in stability of implementations of the Kaczmarz algorithm.  In a similar spirit, weighted and controlled frames often improve the numerical stability of iterative algorithms that invert the frame operator \cite{ABG}.

We remark that the lower bound for almost effective sequences will always be zero, because if $x = e_{0}$, then $$\langle x, e_{0}\rangle e_{0} = e_{0}.$$  Now suppose that $x_{n} = e_{0}$.  Then $$x_{n + 1} = x_{n} + \langle x - x_{n}, e_{n + 1}\rangle e_{n + 1} = e_{0} + \langle e_{0} - e_{0}, e_{n + 1}\rangle e_{n + 1} = e_{0},$$ which means $$\lim_{n\rightarrow\infty} \|x - x_{n}\| = 0.$$  

\begin{theorem}
\label{th3}
Given is $0 < A \leq 1$. A sequence $\{e_{n}\}_{n = 0}^{\infty}$ is almost effective with bound 
$0 \leq (1 - A)$ if and only if $\{g_{n}\}_{n = 0}^{\infty}$ is a frame with bounds 
$0 < A \leq  1$.
\end{theorem}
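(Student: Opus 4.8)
The plan is to derive the entire statement from the Kwapie\'n--Mycielski energy identity \eqref{eqn6}, which I would use in the partial-sum form
$$
\|x\|^2 = \|x - x_n\|^2 + \sum_{i=0}^{n} |\langle x, g_i\rangle|^2, \qquad x \in \mathbb{H},\ n \ge 0.
$$
This follows by a one-line induction on $n$ from the recursion \eqref{e20}, the expansion \eqref{e13}, and the construction \eqref{e2}: one checks that $\langle x - x_{n-1}, e_n\rangle = \langle x, g_n\rangle$, so that, since $\|e_n\| = 1$, each step of \eqref{e20} subtracts exactly $|\langle x, g_n\rangle|^2$ from $\|x - x_{n-1}\|^2$. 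Two consequences then drop out at once. First, for every $x$ the partial sums $\sum_{i=0}^{n} |\langle x, g_i\rangle|^2$ are nondecreasing in $n$ and bounded above by $\|x\|^2$; hence $\sum_{i=0}^{\infty} |\langle x, g_i\rangle|^2$ converges, so $\{g_n\}$ is \emph{automatically} a Bessel sequence with bound $1$, and the numbers $\|x - x_n\|^2$ converge with
$$
\lim_{n\to\infty}\|x - x_n\|^2 = \|x\|^2 - \sum_{i=0}^{\infty} |\langle x, g_i\rangle|^2.
$$

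Given this, both implications reduce to a single substitution. Fixing $x \in \mathbb{H}$ and plugging the displayed limit into the almost-effectiveness inequality, one sees that $\lim_{n\to\infty}\|x - x_n\|^2 \le (1-A)\|x\|^2$ holds if and only if $\sum_{i=0}^{\infty} |\langle x, g_i\rangle|^2 \ge A\|x\|^2$. Quantifying over all $x$, the sequence $\{e_n\}$ is almost effective with bound $1-A$ exactly when $A$ is a lower frame bound for $\{g_n\}$; combined with the upper bound $1$ obtained above and the standing hypothesis $A>0$, this is precisely the statement that $\{g_n\}$ is a frame with bounds $0 < A \le 1$. I would also note in passing that the side conditions match, since $0 < A \le 1$ is equivalent to $0 \le 1 - A < 1$, so that $1-A$ is an admissible almost-effectiveness bound, and that the endpoint $A = 1$ recovers Theorem \ref{th1} (effective $\iff$ $1$-tight frame).

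I do not expect a genuine obstacle here: once the partial-sum form of \eqref{eqn6} is in place, the theorem is essentially a dictionary entry translating ``tail energy $\|x - x_n\|^2$ eventually at most $(1-A)\|x\|^2$'' into ``captured energy $\sum_i |\langle x, g_i\rangle|^2$ at least $A\|x\|^2$''. The only two points deserving explicit care are (i) justifying that $\lim_n \|x - x_n\|^2$ exists at all --- this is the monotone-bounded remark above, and it is needed because the definition of almost effectiveness already refers to that limit --- and (ii) pointing out that no separate argument is required for the upper frame bound, since \eqref{eqn6} forces $\sum_{i=0}^{\infty} |\langle x, g_i\rangle|^2 \le \|x\|^2$ for every $x$ unconditionally.
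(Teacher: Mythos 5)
Your proposal is correct and follows exactly the route the paper intends: Theorem \ref{th3} is stated as a direct consequence of the Kwapie\'n--Mycielski identity \eqref{eqn6}, and your translation of ``$\lim_n\|x-x_n\|^2 \le (1-A)\|x\|^2$'' into the lower frame bound $A$ for $\{g_n\}$, with the upper bound $1$ coming for free, is precisely that argument. Your two points of care --- stating \eqref{eqn6} in its correct partial-sum form (the paper's display has a typo conflating the partial sum with the full series) and justifying the existence of $\lim_n\|x-x_n\|^2$ by monotonicity --- are exactly the details that need to be supplied.
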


We will now provide necessary and sufficient conditions for a Bessel sequence to be almost effective.  There are two necessary conditions, and the first is Proposition $\ref{l1}$ that is a consequence of Theorem $\ref{th3}$. 

\begin{proposition}
\label{l1}
Let $\{e_{n}\}_{n = 0}^{\infty}$ be a Bessel sequence that is also almost effective.  Then $\{e_{n}\}_{n = 0}^{\infty}$ is a frame.    
\end{proposition}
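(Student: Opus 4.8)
The plan is to obtain the lower frame inequality for $\{e_n\}$ by transferring the lower frame bound of $\{g_n\}$ along the operator identity $L_g = \overline{C}\,L_e$, the upper frame inequality being nothing but the Bessel hypothesis.

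First, since $\{e_n\}_{n=0}^{\infty}$ is almost effective, Theorem~\ref{th3} guarantees that $\{g_n\}_{n=0}^{\infty}$ is a frame; fix a lower frame bound $A>0$, so that
\[
\sum_{n=0}^{\infty}|\langle x, g_n\rangle|^2 = \|L_g x\|^2 \ge A\|x\|^2 \qquad \text{for all } x\in\mathbb{H}.
\]
Next I would upgrade the formal relation $L_g = \overline{C}\,L_e$ to a genuine identity of bounded operators. By Proposition~1 of~\cite{HS} one has $\|C - I\|\le 1$, hence $C$, and therefore $\overline{C} = (C^t)^*$, are bounded on $\ell^2(\mathbb{N})$ with $\|\overline{C}\| = \|C\| \le 2$ (transposition and entrywise conjugation preserve the operator norm on $\ell^2$). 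Because $\{e_n\}$ is Bessel, $L_e x\in\ell^2(\mathbb{N})$ for every $x$, so $\overline{C}\,L_e x$ is a well-defined element of $\ell^2(\mathbb{N})$; expanding (\ref{e3}) coordinatewise yields $(\overline{C}\,L_e x)_n = \langle x, e_n\rangle + \sum_{i=0}^{n-1}\overline{c}_{ni}\langle x, e_i\rangle = \langle x, g_n\rangle$, and hence $L_g x = \overline{C}\,L_e x$ as elements of $\ell^2(\mathbb{N})$.

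Combining the two displays, for every $x\in\mathbb{H}$,
\[
A\|x\|^2 \le \|L_g x\|^2 = \|\overline{C}\,L_e x\|^2 \le \|\overline{C}\|^2\,\|L_e x\|^2 \le 4\sum_{n=0}^{\infty}|\langle x, e_n\rangle|^2 ,
\]
so $\{e_n\}$ satisfies the lower frame inequality with constant $A/4$; together with the upper frame inequality supplied by the Bessel assumption, $\{e_n\}$ is a frame. The only point that needs genuine care is the passage to the $\ell^2$-identity $L_g x = \overline{C}\,L_e x$: this is precisely where the Bessel hypothesis (which places $L_e x$ in $\ell^2$) and the boundedness of $\overline{C}$ (through the estimate $\|C-I\|\le 1$) are both indispensable, and dropping either of them breaks the transfer of the lower bound from $\{g_n\}$ to $\{e_n\}$.
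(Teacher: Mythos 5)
Your argument is correct and is essentially identical to the paper's proof: both transfer the lower frame bound of $\{g_n\}$ (from Theorem~\ref{th3}) through $L_g = \overline{C}L_e$ using $\|C-I\|\le 1$, hence $\|\overline{C}\|\le 2$, to get the lower bound $A/4$ for $\{e_n\}$. Your extra care in justifying the $\ell^2$-identity $L_g x = \overline{C}L_e x$ is a welcome but minor elaboration of what the paper assumes from its preliminaries.
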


\begin{proof}  
Because $\{e_n\}$ is Bessel, we only need to show $S_e$ is bounded below.  By Proposition 1 in \cite{HS}, we know that $\|C - I\| \leq 1$, so $\|\overline{C}\| \leq 2.$
Say $\{e_{n}\}$ has an almost effective bound $(1 - A)$, then Theorem $\ref{th3}$ shows $\{g_{n}\}$ is a frame with lower bound $A > 0$.  Let $x \in \ell^{2}$.  Then
$$0 < A\|x\|^{2} \leq \langle S_{g} x, x \rangle = \langle \overline{C} L_{e} x, \overline{C} L_{e} x \rangle
\leq 4\langle L_{e} x, L_{e} x \rangle.$$

Therefore, 
\begin{equation}\label{e10}\langle S_{e} x, x \rangle = \langle L_{e}^{*} L_{e} x, x \rangle  \geq \frac{A}{4} \|x\|^{2} > 0,\end{equation}
which means $\{e_{n}\}$ is a frame.  
\end{proof}

In particular, every Bessel, effective sequence is a frame.  We now give the second necessary condition for a Bessel sequence to be almost effective.

\begin{theorem}
\label{th4}
Let $\{e_{n}\}_{n = 0}^{\infty}$ be a Bessel sequence that is also almost effective.  Then 
$\overline{C} = C^{t}\text{*}:\text{ran}(L_{e})\rightarrow \text{ran}(\overline{C}_{| \text{ran}L_{e}}$) is a Hilbert space isomorphism.
\end{theorem}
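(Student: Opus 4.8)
The plan is to deduce the statement from the relation $L_g=\overline{C}L_e$ together with the two facts already available under these hypotheses: by Proposition~\ref{l1} the sequence $\{e_n\}$ is a frame, and by Theorem~\ref{th3} the sequence $\{g_n\}$ is a frame, say with lower bound $A>0$. Write $B$ for a Bessel bound of $\{e_n\}$. Since $\{e_n\}$ is a frame, its analysis operator $L_e:\mathbb{H}\to\ell^2$ is bounded, injective and bounded below, so $\mathrm{ran}(L_e)$ is closed and $L_e$ is a Hilbert space isomorphism of $\mathbb{H}$ onto $\mathrm{ran}(L_e)$; since $\{g_n\}$ is a frame, $\|L_g x\|^2=\langle S_g x,x\rangle\ge A\|x\|^2$, so $L_g$ is bounded below and $\mathrm{ran}(L_g)$ is closed. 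Finally, by $(\ref{e3})$ the identity $\langle x,g_n\rangle=\sum_{i=0}^{n}\overline{c}_{ni}\langle x,e_i\rangle$ holds coordinatewise, and $\|\overline{C}\|=\|C\|\le 1+\|C-I\|\le 2$ by Proposition~1 in \cite{HS}; hence $L_g=\overline{C}L_e$ is a genuine identity of bounded operators on $\mathbb{H}$, not a merely formal one.

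First I would identify the target space. Because $L_e$ is surjective onto $\mathrm{ran}(L_e)$,
$$\overline{C}\bigl(\mathrm{ran}(L_e)\bigr)=\overline{C}L_e(\mathbb{H})=L_g(\mathbb{H})=\mathrm{ran}(L_g),$$
so $\mathrm{ran}\bigl(\overline{C}_{|\mathrm{ran}(L_e)}\bigr)=\mathrm{ran}(L_g)$ is a closed subspace of $\ell^2$, hence a Hilbert space, and $\overline{C}$ indeed maps into the claimed codomain. Boundedness of $\overline{C}_{|\mathrm{ran}(L_e)}$ is immediate from $\|\overline{C}\|\le 2$. For the bound in the other direction, let $y\in\mathrm{ran}(L_e)$ and choose $x\in\mathbb{H}$ with $y=L_e x$; then the frame lower bound for $\{g_n\}$ and the Bessel inequality for $\{e_n\}$ give
$$\|\overline{C}y\|^2=\|L_g x\|^2=\langle S_g x,x\rangle\ge A\|x\|^2\ge \frac{A}{B}\,\langle S_e x,x\rangle=\frac{A}{B}\,\|L_e x\|^2=\frac{A}{B}\,\|y\|^2 .$$
Thus $\overline{C}_{|\mathrm{ran}(L_e)}$ is bounded below, hence injective; being also bounded and surjective onto the closed subspace $\mathrm{ran}(L_g)$, it is a bounded linear bijection with bounded inverse, i.e. a Hilbert space isomorphism, which is the assertion.

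I expect the only delicate point to be the one flagged above: one must be sure that $\overline{C}$ (equivalently $C$), which a priori is only the algebraic inverse of the possibly unbounded matrix $M$, genuinely defines a bounded operator on $\ell^2$, and that $L_g=\overline{C}L_e$ is a true operator equation. This is exactly where the estimate $\|C-I\|\le 1$ from \cite{HS} is used, in combination with the boundedness of $L_e$ and $L_g$ supplied by Proposition~\ref{l1} and Theorem~\ref{th3}; once these are in hand, the rest is the short chain of frame inequalities displayed above.
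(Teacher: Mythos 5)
Your proof is correct and follows essentially the same route as the paper's: both rest on the key inequality $\|\overline{C}y\|^{2}=\|L_{g}x\|^{2}\geq A\|x\|^{2}\geq \frac{A}{B}\|L_{e}x\|^{2}=\frac{A}{B}\|y\|^{2}$ for $y=L_{e}x\in\mathrm{ran}(L_{e})$, obtained from the lower frame bound of $\{g_{n}\}$ (via Theorem~\ref{th3}) and the upper bound of $\{e_{n}\}$ (via Proposition~\ref{l1}). The only cosmetic difference is that you conclude by noting the restriction is a bounded, bounded-below bijection onto the closed subspace $\mathrm{ran}(L_{g})$, whereas the paper phrases the same fact as boundedness of $\overline{M}$ on $\mathrm{ran}(\overline{C}_{|\mathrm{ran}L_{e}})$; your explicit justification that $L_{g}=\overline{C}L_{e}$ holds as a bona fide operator identity (using $\|C-I\|\leq 1$) is a welcome clarification of a point the paper treats only formally.
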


\begin{proof} 
By Proposition $\ref{l1}$, $\{e_n\}$ is a frame, which means ran($L_{e}$) is a Hilbert subspace of $\ell^{2}(\mathbb{N})$.

We show that $C^{t}\overline{C}$ is bounded below on ran($L_{e}$).  Let $y \in$Ran($L_{e}$), and let 
$x \in\ell^{2}(\mathbb{N})$ be such that $L_{e} x = y$.  Again because $\{e_{n}\}$ is a frame, there exists some constant $B > 0$ such that 
$\|y\| = \| L_{e} x\|^{2} \leq B \|x\|^{2}$.  Thus, $\|x\|^{2} \geq \frac{\|y\|}{B}$.  Moreover, almost effectiveness gives that $\{g_{n}\}$ is a frame.  So there exists some constant $A > 0$ such that 
$\|\overline{C} (L_{e} x) \|^{2} \geq  A \|x\|^{2}$.

Combining the two equations, it follows that 
$$ \| \overline{C} y \|^{2} \geq \frac{A}{B}\|y\|^{2},$$
which holds for all  $y \in$ ran$(L_{e})$.  So $C^{t} \overline{C} \geq \frac{A}{B} I > 0$ on this space.   

Now we show that $M$ is bounded on ran($\overline{C_{| \text{ran}L_{e}}}$) = $X$.  Because $M C = I$, we have $$\overline{MC} = \overline{I} = I.$$ 

Let $z \in X$, then take 
$y\in$ran($L_{e}$) such that $z = \overline{C} y$.  Then because 
$\| \overline{C} y \|^{2} \geq \frac{A}{B}\|y\|^{2},$ it follows that 
$$\|\overline{M} z\| = \|\overline{M} \overline{C} y\| = \| y\| \leq \frac{B}{A} \|\overline{C} y\|^{2} = \frac{B}{A} \|z\|^{2}.$$  
Therefore, $M$ is bounded on ran($\overline{C}_{| \text{ran}L_{e}}$), which implies 
$\overline{C}:\text{ran}(L_{e})\rightarrow \text{ran}(\overline{C}_{| \text{ran}L_{e}}$) is a isomorphism.
\end{proof}

Note that if a Bessel sequence $\{e_n\}$ is almost effective, then $M$ is bounded on ran($\overline{C}_{|\text{ran}{L_{e}}}$).  These two necessary conditions are also sufficient.

\begin{theorem}
\label{th6}
Suppose $\{e_{n}\}_{n = 0}^{\infty}$ is a frame and 
$\overline{C}: \text{ran}(L_{e}) \rightarrow \text{ran}({\overline{C}_{|\text{ran}L_{e}}})$ is a Hilbert space isomorphism.  Then $\{e_{n}\}_{n = 0}^{\infty}$ is almost effective.  
\end{theorem}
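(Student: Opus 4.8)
The plan is to reduce the statement to Theorem~\ref{th3}: it suffices to show that the hypotheses force $\{g_n\}$ to be a frame with some lower bound $A$ satisfying $0<A\le 1$, since then Theorem~\ref{th3} immediately gives that $\{e_n\}$ is almost effective with bound $1-A$. The upper frame bound is free: formula \eqref{eqn6} shows $\sum_n|\langle x,g_n\rangle|^2\le\|x\|^2$ for every $x\in\mathbb H$, so $\{g_n\}$ is automatically Bessel with constant at most $1$. Thus the whole problem comes down to producing a \emph{strictly positive lower frame bound} for $\{g_n\}$.

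First I would upgrade the formal relation $L_g=\overline{C}\,L_e$ to a genuine operator identity on $\mathbb H$. Using $g_n=e_n+\sum_{i=0}^{n-1}c_{ni}e_i$ from \eqref{e3}, the $n$-th coordinate of $L_gx$ equals $\sum_{i=0}^{n}\overline{c_{ni}}\langle x,e_i\rangle$, which is exactly the $n$-th coordinate of $\overline{C}$ applied to the sequence $L_ex=(\langle x,e_i\rangle)_i$. Since $\{e_n\}$ is a frame, $\text{ran}(L_e)$ is a closed subspace of $\ell^2(\mathbb N)$ and $L_ex$ lies in it; by hypothesis $\overline{C}$ is an isomorphism from $\text{ran}(L_e)$ onto its image, hence bounded there, so $\overline{C}(L_ex)\in\ell^2(\mathbb N)$. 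Comparing coordinates gives $L_gx=\overline{C}(L_ex)$ for all $x\in\mathbb H$ (and re-confirms that $\{g_n\}$ is Bessel).

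Then I would simply chain lower bounds. Being a Hilbert space isomorphism, $\overline{C}$ is bounded below on $\text{ran}(L_e)$, say $\|\overline{C}y\|\ge\delta\|y\|$ for $y\in\text{ran}(L_e)$ with $\delta>0$; and $\{e_n\}$ being a frame gives $\|L_ex\|^2\ge A_e\|x\|^2$ with $A_e>0$. Composing these, $\|L_gx\|^2=\|\overline{C}(L_ex)\|^2\ge\delta^2\|L_ex\|^2\ge\delta^2A_e\|x\|^2$, so $S_g=L_g^*L_g\ge\delta^2A_e\,I>0$. Hence $\{g_n\}$ is a frame, and combined with the automatic upper bound $1$ it is a frame with bounds $0<A\le 1$ where $A:=\min\{\delta^2A_e,1\}$. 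Theorem~\ref{th3} then finishes the argument.

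The only delicate step is the first one — confirming that $L_g=\overline{C}L_e$ holds as an identity between bounded operators, in particular that the image sequence $\overline{C}(L_ex)$ genuinely lies in $\ell^2(\mathbb N)$. This is precisely where both hypotheses enter: closedness of $\text{ran}(L_e)$ (because $\{e_n\}$ is a frame) and boundedness of $\overline{C}$ on that range (because it is an isomorphism). Once this is in place, the lower bound is a one-line composition estimate.
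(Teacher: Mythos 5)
Your proof is correct and takes essentially the same route as the paper's: write $L_g=\overline{C}L_e$, use the isomorphism hypothesis to bound $\overline{C}$ below on $\mathrm{ran}(L_e)$, combine this with the lower frame bound of $\{e_n\}$ to obtain a lower frame bound for $\{g_n\}$, and invoke Theorem~\ref{th3}. Your additional care in verifying $L_g=\overline{C}L_e$ as a genuine operator identity and in noting the automatic Bessel bound $1$ from (\ref{eqn6}) only makes explicit what the paper's proof leaves implicit.
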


\begin{proof}
We show that $\{g_{n}\}$ is a frame.  Observe there exists constants $0< A_{1}, A_{2}, B_{1}, B_{2} < \infty$ such that 
$$\langle S_{g} x, x \rangle = \langle L_{e}^*C^{t}\overline{C}L_{e} x, x\rangle = \langle C^{t}\overline{C}L_{e} x, L_{e} x \rangle \geq A_{2}\| L_{e} x\|^{2} \geq A_{2} A_{1} \|x\|^{2},$$
and  
$$\langle C^{t}\overline{C}L_{e} x, L_{e} x \rangle \leq B_{2} \|L_{e} x\|^{2} \leq B_{2} B_{1} \|x\|^{2}.$$  So $\{g_{n}\}$ is a frame.  Then by Theorem \ref{th3}, $\{e_{n}\}_{n = 0}^{\infty}$ is almost 
effective.
\end{proof}

Proposition $\ref{l1}$ and Theorems $\ref{th4},$ $\ref{th6}$ now yield the following characterization for almost effective sequences.

\begin{theorem}
\label{th7}
Let $\{e_{n}\}_{n = 0}^{\infty}$ be a Bessel sequence.  Then $\{e_{n}\}_{n = 0}^{\infty}$ is almost effective if and only if it is a frame and $C: \text{ran}(L_{e})\rightarrow \text{ran}(C_{| \text{ran}L_{e}})$ is an isomorphism.
\end{theorem}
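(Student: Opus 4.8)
The plan is to obtain Theorem \ref{th7} as a direct synthesis of the results just established: Proposition \ref{l1} together with Theorems \ref{th4} and \ref{th6} already supply the two implications, so what remains is only to package them and to match the matrix $\overline{C}=(C^{t})^{*}$ used there with the matrix $C$ appearing in the statement.

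For the forward implication, suppose $\{e_{n}\}_{n=0}^{\infty}$ is a Bessel sequence that is almost effective. Proposition \ref{l1} gives that $\{e_{n}\}$ is a frame, so $L_{e}$ is bounded below and $\text{ran}(L_{e})$ is a closed, hence Hilbert, subspace of $\ell^{2}(\mathbb{N})$. Theorem \ref{th4} then gives that $\overline{C}:\text{ran}(L_{e})\to\text{ran}(\overline{C}_{|\text{ran}L_{e}})$ is a Hilbert space isomorphism; in fact its proof yields the quantitative statement $C^{t}\overline{C}\geq \tfrac{A}{B}I>0$ on $\text{ran}(L_{e})$ together with boundedness of $\overline{M}$ on the image. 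Exactly the same argument, run with $C$ in place of $\overline{C}$ and $M$ in place of $\overline{M}$ (it uses only $MC=CM=I$, Proposition 1 in \cite{HS}, and the frame bounds of $\{e_{n}\}$ and $\{g_{n}\}$), shows that $C:\text{ran}(L_{e})\to\text{ran}(C_{|\text{ran}L_{e}})$ is bounded and bounded below; since $\text{ran}(L_{e})$ is closed and $M$ is bounded on the image, this restriction of $C$ is an isomorphism onto its range.

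For the converse, suppose $\{e_{n}\}$ is a frame and $C:\text{ran}(L_{e})\to\text{ran}(C_{|\text{ran}L_{e}})$ is an isomorphism. Passing to entrywise conjugates, an isometric operation on $\ell^{2}(\mathbb{N})$ that interchanges $C$ with $\overline{C}$, we get that $\overline{C}:\text{ran}(L_{e})\to\text{ran}(\overline{C}_{|\text{ran}L_{e}})$ is an isomorphism as well, so Theorem \ref{th6} applies and $\{e_{n}\}$ is almost effective. Note that by (\ref{eqn6}) the sequence $\{g_{n}\}$ always has upper Bessel bound at most $1$, so whenever it is a frame its bounds satisfy $0<A\le 1$ and Theorem \ref{th3} is applicable, as is implicitly needed inside Theorem \ref{th6}.

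The argument is thus essentially organizational, and I do not expect a substantive obstacle. The one point requiring a little care is the passage between the condition on $C$ in the statement and the condition on $\overline{C}=(C^{t})^{*}$ in Theorems \ref{th4} and \ref{th6}: one should make explicit either that these are interchanged by the antilinear isometry of entrywise conjugation, or that the proofs of those theorems go through verbatim with $C$ replacing $\overline{C}$. It is also worth stating clearly what ``isomorphism onto the range'' means here, namely that $C$ is bounded and bounded below on the closed subspace $\text{ran}(L_{e})$, the continuity of the inverse then being automatic.
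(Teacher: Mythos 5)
Your proposal is correct and follows exactly the route the paper takes: Theorem \ref{th7} is obtained by combining Proposition \ref{l1} with Theorems \ref{th4} and \ref{th6}, which is precisely how the paper presents it (with no separate argument). Your additional care in reconciling the $C$ of the statement with the $\overline{C}$ appearing in Theorems \ref{th4} and \ref{th6} is a reasonable tidying of a point the paper leaves implicit, not a different approach.
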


\section{Frames}
\label{s:4}

The results of this section can be viewed as an extension of Theorem \ref{th1} (proven in \cite{KM}) to the case when the sequence $\{e_n\}_{n = 0}^{\infty}$ is a frame, a Riesz basis or an orthonormal basis.  For the cases when $\{e_n\}_{n = 0}^{\infty}$ is a frame or a Riesz basis, our characterizations hold whenever $M$ is bounded.  This happens, for example, when there is some $N \in \mathbb{N}_0$ such that $\{e_n\}_{n = N}^{\infty}$ is orthonormal and more generally when $C - I$ is compact.

\begin{theorem}
\label{th8}
Let $\{e_{n}\}_{n = 0}^{\infty}$ be a Bessel sequence and let $M$ be bounded.  Then, $\{e_{n}\}_{n = 0}^{\infty}$ is a frame if and only if $\{g_{n}\}_{n = 0}^{\infty}$ is a frame.  
\end{theorem}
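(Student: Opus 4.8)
The plan is to exploit the factorization $L_g = \overline{C}\,L_e$ established in the text, which gives $S_g = L_g^* L_g = L_e^* \,C^t\overline{C}\, L_e$, together with the dual factorization coming from $MC = CM = I$. The key algebraic observation is that $C$ and $M$ are mutual inverses as operators, and since $M$ is assumed bounded while $\|C - I\| \le 1$ (Proposition 1 of \cite{HS}) forces $C$ bounded, both $C$ and $\overline{C} = (C^t)^*$ are bounded invertible operators on $\ell^2(\mathbb{N})$ with bounded inverses $M$ and $\overline{M} = (M^t)^*$. Consequently $C^t\overline{C}$ is a bounded, invertible, positive operator on $\ell^2(\mathbb{N})$: there exist constants $0 < \alpha \le \beta < \infty$ with $\alpha I \le C^t\overline{C} \le \beta I$.

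First I would record the boundedness of $C$ and of $\overline{C}, C^t, \overline{M}, M^t$ from the hypothesis on $M$ and the $\|C-I\|\le 1$ bound, and note $(C^t\overline{C})^{-1} = \overline{M}M^t = \overline{M}\,\overline{C^*}$ is bounded, so $C^t\overline{C}$ is bounded below and above by positive multiples of the identity. Next, for the forward direction, suppose $\{e_n\}$ is a frame with bounds $0 < A_e \le B_e$, so $A_e \|x\|^2 \le \|L_e x\|^2 \le B_e\|x\|^2$ for all $x \in \mathbb{H}$. Then
\begin{equation*}
\langle S_g x, x\rangle = \langle C^t\overline{C}\,L_e x,\, L_e x\rangle
\end{equation*}
is squeezed between $\alpha\|L_e x\|^2 \ge \alpha A_e\|x\|^2$ and $\beta\|L_e x\|^2 \le \beta B_e\|x\|^2$, so $\{g_n\}$ is a frame. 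For the converse, suppose $\{g_n\}$ is a frame; since $L_e = M L_g$ (invert the factorization using $MC = I$, hence $\overline{M}\,\overline{C} = I$ so $L_e = \overline M L_g$... more carefully $L_g = \overline C L_e$ gives $L_e = \overline M L_g$ because $\overline M \overline C = \overline{CM}^{*t}\cdots = I$), the same argument with the roles of $\overline C$ and $\overline M$ interchanged — using that $\overline M$ is bounded and that $\overline C$ is bounded below on all of $\ell^2$ — shows $\{e_n\}$ is a frame. Since $\{e_n\}$ is already assumed Bessel, one really only needs the lower bound here, which follows from $\|L_e x\| = \|\overline M L_g x\| \ge \|\overline M^{-1}\|^{-1}\|\cdots\|$; cleaner is $\|L_g x\| = \|\overline C L_e x\| \le \|\overline C\|\,\|L_e x\|$, so a lower frame bound for $\{g_n\}$ transfers directly to one for $\{e_n\}$.

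The main obstacle I anticipate is purely one of bookkeeping with the conjugation/transpose operations: one must be careful that $\overline{C} := (C^t)^*$ really is the bounded inverse of $\overline{M} := (M^t)^*$, i.e.\ that taking transposes and adjoints respects the identity $CM = MC = I$ so that $\overline{C}\,\overline{M} = \overline{M}\,\overline{C} = I$ as bounded operators (and not merely formally, entrywise). Establishing this genuine operator identity — rather than a formal matrix identity — is where the hypothesis "$M$ bounded" does its essential work, since it guarantees all the relevant triangular matrices define bounded operators on $\ell^2$ for which algebraic inverses coincide with operator inverses. Once that is in place, the frame inequalities transfer in both directions by the elementary squeezing argument above.
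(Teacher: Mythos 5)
Your proposal is correct, and it reaches Theorem \ref{th8} by a more direct route than the paper does. The paper's own proof is a two-line deduction from its almost-effectiveness machinery: boundedness of $M$ together with $\|C-I\|\le 1$ makes the isomorphism condition in Theorem \ref{th7} automatic, so a Bessel sequence $\{e_n\}$ is a frame iff it is almost effective, and Theorem \ref{th3} then translates almost effectiveness into the frame property of $\{g_n\}$. You bypass the notion of almost effectiveness (and Theorem \ref{th3}, which the paper does not prove in detail) and argue directly from the factorization $L_g=\overline{C}L_e$: with $M$ bounded and $\|C\|\le 2$, and with $MC=CM=I$ holding as a genuine operator identity (the point you rightly flag; it holds because for bounded lower triangular matrices each entry of the operator composition is the finite algebraic sum, so the composition has matrix $I$), $\overline{C}$ is bounded above and below, and the frame inequalities transfer in both directions --- indeed $\langle S_g x,x\rangle=\|\overline{C}L_ex\|^2$ sits between $\|\overline{M}\|^{-2}A_e\|x\|^2$ and $4B_e\|x\|^2$, while your converse direction needs only $\|\overline{C}\|\le 2$ plus the Bessel hypothesis, not boundedness of $M$. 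In substance your computations are the same ones underlying the paper's Proposition \ref{l1} and Theorem \ref{th6}, so what you gain is a self-contained quantitative statement with explicit bounds, while the paper's route gains uniformity with its almost-effective framework (and the sharper upper bound $1$ for $\{g_n\}$ coming from the Kwapie\'n--Mycielski identity). Two cosmetic slips: $(C^t\overline{C})^{-1}=\overline{M}\,M^t=\overline{M}\,\overline{M^*}$, not $\overline{M}\,\overline{C^*}$, and your parenthetical derivation of $L_e=\overline{M}L_g$ is garbled, though the ``cleaner'' argument you settle on at the end is the right one.
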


\begin{proof}
Because $M$ is bounded, injective, and $M^{-1} = C$ is bounded, Theorem \ref{th7} implies $\{e_n\}$ is a frame if and only if it is almost effective.  Then applying Theorem \ref{th3} proves Theorem \ref{th8}.
\end{proof}

We have a similar relationship between Riesz bases $\{e_{n}\}_{n = 0}^{\infty}$ and $\{g_{n}\}_{n = 0}^{\infty}$.

\begin{theorem}
\label{th9}
Let $\{e_{n}\}_{n = 0}^{\infty}$ be a Bessel sequence and $M$ be bounded.  Then $\{e_{n}\}_{n = 0}^{\infty}$ is a Riesz basis if and only if $\{g_{n}\}_{n = 0}^{\infty}$ is a Riesz basis. 
\end{theorem}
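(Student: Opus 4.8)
The plan is to reduce the Riesz basis statement to the frame statement already established in Theorem \ref{th8}, using the standard fact that a frame is a Riesz basis if and only if it is exact — i.e. removing any single element destroys the frame property. So suppose $\{e_n\}$ is a Riesz basis with $M$ bounded; then in particular $\{e_n\}$ is a frame, hence by Theorem \ref{th8} $\{g_n\}$ is a frame, and it remains to show $\{g_n\}$ is exact (and conversely). The key observation that makes this work is the relation $L_g = \overline{C} L_e$ together with the fact that, since $M$ is bounded with bounded inverse $C$, the operator $\overline{C} = (C^t)^*$ is a bounded isomorphism of $\ell^2(\mathbb{N})$. Thus $\operatorname{ran}(L_g) = \overline{C}\,\operatorname{ran}(L_e)$, and $\overline{C}$ restricts to a Hilbert space isomorphism between these two ranges.

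The main step is to translate exactness into a statement about the synthesis/analysis ranges that is manifestly preserved by an invertible change of variables. Recall that for a frame $\{f_k\}$, the element $f_{k_0}$ is removable (the remaining system is still a frame) precisely when $\delta_{k_0} \notin \overline{\operatorname{ran}(L_f)}$; equivalently, a frame is a Riesz basis iff $\operatorname{ran}(L_f)$ is closed of finite codimension behavior — more precisely iff the analysis operator $L_f$ is onto $\ell^2(\mathbb{N})$. (This is the cleanest criterion: a frame is exact iff its analysis operator is surjective, iff its synthesis operator is injective.) I would first record this criterion, then argue: $L_e$ is surjective $\iff$ $L_g = \overline{C} L_e$ is surjective, because $\overline{C}$ is a bijection of $\ell^2(\mathbb{N})$. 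Running the same equivalence in reverse, using that $M$ bounded gives $L_e = \overline{M} L_g$ with $\overline{M}$ bounded and invertible, handles the converse direction. Hence $\{e_n\}$ is a Riesz basis iff $\{g_n\}$ is a Riesz basis.

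I expect the main obstacle to be pinning down and citing the exactness criterion in a form that applies here, and being careful that the relevant operators are genuinely everywhere-defined and bounded: the formula $L_g = \overline{C} L_e$ is derived only \emph{formally} in the excerpt, so one must confirm that $\overline{C}$ extends to a bounded operator on all of $\ell^2(\mathbb{N})$ — which it does exactly when $M$ (equivalently $C$) is bounded, as hypothesized — and similarly for $\overline{M}$. Once those boundedness and invertibility facts are in place, the argument is a short formal manipulation: invertible bounded operators preserve surjectivity of compositions, and surjectivity of the analysis operator characterizes exactness among frames. A secondary point to verify carefully is that "Riesz basis" as defined in the excerpt (a frame that stops being a frame after removal of any element) is the notion for which the surjective-analysis-operator criterion holds; this is classical (see, e.g., \cite{CR}), and I would invoke it directly rather than reprove it.
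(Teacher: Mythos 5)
Your proposal is correct and follows essentially the same route as the paper: both reduce the frame part to Theorem \ref{th8} and then use the relation $L_g = \overline{C}L_e$ (with $\overline{C}$, $\overline{M}$ bounded and invertible since $M$ is bounded) together with the classical fact that a frame is a Riesz basis precisely when its analysis operator is surjective. The paper's proof is exactly this short transfer of surjectivity in both directions, so no substantive difference to report.
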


\begin{proof}

As in Theorem \ref{th8}, $M$ and $C$ are isomorphisms.

($\Leftarrow$)

Suppose $\{e_{n}\}$ is a Riesz basis.  Then it is a frame, so Theorem \ref{th8} proves $\{g_n\}$ is a frame.  Now note $L_{e}$ is surjective, so $L_{g} = \overline{C} L_{e}$ is also surjective, which means $ \{g_{n}\}$ is a Riesz basis. \\
($\Rightarrow$)

Conversely, suppose $\{g_{n}\}$ is a Riesz basis, so that $\{e_n\}$ is a frame, as before.  Because $L_g = \overline{C} L_e$ is surjective and $M$ is bijective, we conclude $$L_e = \overline{C}^{-1} \overline{C} L_e = \overline{M} L_g$$ is surjective.

\end{proof}

The next result, Theorem \ref{th10}, provides us with a simple characterization of all effective Riesz sequences in terms of orthonormal bases. This indicates that good sequences for the Kaczmarz algorithm must be found among inexact redundant frames.

\begin{theorem}
\label{th10}
An effective sequence is a Riesz basis if and only if it is an orthonormal basis.
\end{theorem}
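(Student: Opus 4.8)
The plan is to prove both implications using the machinery already assembled. The backward direction is immediate: if $\{e_n\}$ is an orthonormal basis, then the Example in Section~\ref{s:2} shows $g_n = e_n$, so $\{g_n\}$ is a $1$-tight frame, and by Theorem~\ref{th1} the sequence $\{e_n\}$ is effective; it is of course a Riesz basis. The substance is the forward direction. Suppose $\{e_n\}$ is both effective and a Riesz basis. Effectiveness gives, via Theorem~\ref{th1}, that $\{g_n\}$ is a tight frame with constant $1$. The idea is to show that the Riesz basis hypothesis forces this $1$-tight frame $\{g_n\}$ to be an orthonormal basis, and then to pull that structure back to $\{e_n\}$.

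First I would argue that $\{g_n\}$ is an orthonormal basis. A $1$-tight frame has frame operator $S_g = I$; if in addition each $\|g_n\| = 1$, then $\{g_n\}$ is a Parseval frame of unit vectors, which is automatically an orthonormal basis (from $\sum_k |\langle g_n, g_k\rangle|^2 = \|g_n\|^2 = 1$ and the $k=n$ term already contributing $1$, all cross terms vanish). So it suffices to check $\|g_n\| = 1$ for every $n$. Here is where the Riesz basis assumption on $\{e_n\}$ enters, together with the relation $L_g = \overline{C}\,L_e$ and the fact (needed to invoke the frame-duality results) that $M$ is bounded: since $\{e_n\}$ is a Riesz basis, $L_e$ is surjective, hence by Theorem~\ref{th9} $\{g_n\}$ is a Riesz basis as well. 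But a Riesz basis that is also a $1$-tight frame is an orthonormal basis — indeed a Riesz basis $\{g_n\}$ with $S_g = I$ satisfies, by the Riesz-basis inequalities $A\|c\|^2 \le \|\sum c_n g_n\|^2 \le B\|c\|^2$ together with $\sum |\langle f, g_n\rangle|^2 = \|f\|^2$, the orthonormality $\langle g_m, g_n\rangle = \delta_{mn}$. So $\{g_n\}$ is an orthonormal basis; in particular $\|g_n\| = 1$ for all $n$.

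Now I would deduce that $e_n = g_n$ for all $n$, which finishes the proof since $\{g_n\}$ is an orthonormal basis. Working inductively with the defining recursion (\ref{e2}), $g_0 = e_0$; assume $g_i = e_i$ for $i < n$. Then $g_n = e_n - \sum_{i=0}^{n-1}\langle e_n, e_i\rangle g_i = e_n - \sum_{i=0}^{n-1}\langle e_n, g_i\rangle g_i$. But the $\{g_i\}$ are orthonormal, so $\sum_{i<n}\langle e_n, g_i\rangle g_i$ is the orthogonal projection of $e_n$ onto $\mathrm{span}\{g_0,\dots,g_{n-1}\}$; thus $g_n = e_n - P_{n-1}e_n$ is $e_n$ minus its projection, and $\|g_n\|^2 = \|e_n\|^2 - \|P_{n-1}e_n\|^2 = 1 - \|P_{n-1}e_n\|^2$. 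Since $\|g_n\| = 1$, we get $P_{n-1}e_n = 0$, i.e. $\langle e_n, g_i\rangle = 0$ for all $i < n$, and therefore $g_n = e_n$. Hence $e_n = g_n$ for every $n$, so $\{e_n\}$ is an orthonormal basis.

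The main obstacle is the clean passage from "$1$-tight frame that is a Riesz basis" to "orthonormal basis" — one must be careful that this is genuinely forced and not merely plausible; the cleanest route is to observe that once $\|g_n\| = 1$ for all $n$, Parseval plus unit norm already gives orthonormality directly, so the real work is only to certify $\|g_n\| = 1$, and the only hypothesis available for that is the Riesz property transferred through Theorem~\ref{th9}. A secondary point to handle with care is whether Theorem~\ref{th9} is applicable, i.e. whether $M$ is bounded: for an \emph{effective} Riesz basis this should follow (effectiveness gives $\|C - I\| \le 1$ by Proposition~1 of \cite{HS}, hence $C$ bounded, hence $M = C^{-1}$ bounded since $M$ is the inverse of a bounded, boundedly invertible operator on the relevant range), but this needs to be stated explicitly rather than assumed.
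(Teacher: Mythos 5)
Your backward direction and your final induction (once $\{g_n\}$ is known to be an orthonormal basis, the recursion \eqref{e2} forces $g_n=e_n$ and hence orthogonality of the $e_n$'s) are fine, and the reduction ``Parseval $+$ Riesz basis $\Rightarrow$ orthonormal basis'' is a standard fact. The genuine gap is in the step where you transfer the Riesz property from $\{e_n\}$ to $\{g_n\}$ by invoking Theorem~\ref{th9}: that theorem carries the hypothesis that $M$ is bounded, and Theorem~\ref{th10} makes no such assumption. Your attempted justification --- $\|C-I\|\le 1$, hence $C$ is bounded, hence $M=C^{-1}$ is bounded ``since $M$ is the inverse of a bounded, boundedly invertible operator on the relevant range'' --- is circular and, as a general principle, false: a bounded injective (even triangular, unipotent) operator need not have a bounded inverse, and ``boundedly invertible'' is precisely the assertion to be proved, not a consequence of boundedness. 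Nor would boundedness of $M$ merely on $\mathrm{ran}(\overline{C}_{|\mathrm{ran}\,L_e})$ satisfy the hypothesis of Theorem~\ref{th9} as stated. One can in fact rescue your route, but only with a genuinely new argument that your proposal does not contain: from the Riesz property $L_e$ is onto $\ell^2$ and from effectiveness $L_g$ is an isometry, so $\overline{C}=L_gL_e^{-1}$ is bounded below with closed range; one must then still show this range is all of $\ell^2$ (e.g.\ by checking $\ker(\overline{C}^{\,*})=\{0\}$, using that $\overline{C}^{\,*}-I$ is a strictly triangular contraction and hence has no unimodular eigenvalue), and only then is $\{g_n\}$ a Riesz basis. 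Without some such argument your proof does not close.

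For comparison, the paper avoids the $\{g_n\}$/Theorem~\ref{th9} machinery entirely and argues contrapositively through the Haller--Szwarc characterization: with $U=C-I$ and $G$ the Grammian, effectiveness makes $U^*U$ an orthogonal projection (Theorem~\ref{th2}), non-orthonormality makes $U\ne 0$ and hence $U^*U$ a nonzero projection, and the identity $C^*GC=I-U^*U$ then produces $x\ne 0$ with $\langle GCx,Cx\rangle=0$ while $Cx\ne 0$; thus $G$ is not bounded below and $\{e_n\}$ cannot be a Riesz basis. That route needs no boundedness of $M$, which is exactly the obstruction your proposal does not overcome.
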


\begin{proof}
Let $\{e_{n}\}$ be a non-orthonormal, effective sequence.  Then let $U = C - I$, and let $G$ be the Grammian of $\{e_{n}\}$.  Then $U$ and $G$ are bounded.  Moreover, it can be shown that $C$* $G \ C = I - U$* $U$ (Proposition 1, \cite{HS}).  Theorem $\ref{th2}$ shows    
$\{e_{n}\}$ is effective and if and only if $U$* $U$ is a projection.  
Then, because $\{e_{n}\}$ is not orthonormal, $M \neq I \Rightarrow C \neq I \Rightarrow U \neq$ 0.  So $U$ has at least one nonzero coordinate, say $c_{n i}$, and because $U$ is lower triangular, $n \geq i$.  Moreover, observe that the $(i i)$ position of $U$*$U$ is of the form are of the form 
$$\sum_{k = i}^{\infty} \mid c_{k i}\mid^{2}$$ which means $U$*$U$ has a nonzero entry in the $(i i)$ coordinate.  Hence, $U$*$U$ is a non-zero projection.

Then $I - U$*$U =  C$* $G \ C$ has a non-trivial kernel.  So there exists $x \neq$ 0 such that $0 = \langle C$* $G C x, x \rangle$ = $\langle G C x, C x \rangle$.  But $C$ is injective, so $C x \neq$ 0.  Hence, $G$ is not strictly positive,  
which means $\{e_{n}\}$ is not a Riesz basis.
\end{proof}

The following corollary (Corollary 4.4) is both known \cite{CZ} and elementary to prove. The brief proof we provide (one of several known to us) takes direct advantage of some of the observations made in this paper.  We present it here in order to illustrate a special case of and to provide context for the duality paradigm presented in Theorems 4.1 and 4.2.  As such, Corollary 4.4. is an example of the broader principle of duality that is at work here.  
\begin{corollary}
\label{c1}
Let $\{e_{n}\}_{n = 0}^{\infty}$ be a Bessel sequence.  Then $\{e_{n}\}_{n = 0}^{\infty}$ is an orthonormal basis if and only if $\{g_{n}\}_{n = 0}^{\infty}$ is an orthonormal basis.  
\end{corollary}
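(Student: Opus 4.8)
The plan is to prove the two implications separately, leaning on the duality machinery of Section~\ref{s:4}.  The forward implication requires nothing new: if $\{e_{n}\}$ is an orthonormal basis then $\langle e_{i},e_{j}\rangle=\delta_{ij}$, so $M=I$, hence $C=M^{-1}=I$, and (\ref{e3}) gives $g_{n}=e_{n}$ for every $n$; thus $\{g_{n}\}$ is an orthonormal basis.  This is exactly the Example following Theorem~\ref{th1}.

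For the converse, suppose $\{g_{n}\}$ is an orthonormal basis.  First, an orthonormal basis is a tight frame with constant $1$, so Theorem~\ref{th1} immediately gives that $\{e_{n}\}$ is effective.  I would then upgrade this to the statement that $\{e_{n}\}$ is a Riesz basis, after which Theorem~\ref{th10} forces $\{e_{n}\}$ to be an orthonormal basis and we are done.  To see that $\{e_{n}\}$ is a Riesz basis I would invoke Theorem~\ref{th9}, whose hypothesis is that $M$ is bounded --- and this is the one point that is not handed to us.  The way around it is the intertwining relation $L_{e}=\overline{M}\,L_{g}$ coming from (\ref{e2new}): since $\{g_{n}\}$ is an orthonormal basis the analysis operator $L_{g}\colon\mathbb{H}\to\ell^{2}(\mathbb{N})$ is unitary, and since $\{e_{n}\}$ is Bessel, $L_{e}$ is bounded, so $\overline{M}=L_{e}L_{g}^{*}$ is bounded and hence so is $M$.  (Boundedness of $C$ is then free, either from $\|C-I\|\le1$, Proposition~1 of \cite{HS}, or from the observation that the bounded operators $\overline{M}$ and $\overline{C}$ are mutual algebraic, hence genuine, inverses.)  With $M$ bounded, Theorem~\ref{th9} applies: $\{g_{n}\}$ is a Riesz basis, therefore so is $\{e_{n}\}$; and Theorem~\ref{th10} then yields that $\{e_{n}\}$ is an orthonormal basis.

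The step I expect to be the main obstacle is precisely this boundedness of $M$ (and $C$): the clean equivalences of Theorems~\ref{th8}--\ref{th10} all presuppose it, so here it must be extracted rather than assumed.  The point is that the hypotheses that $\{e_{n}\}$ is Bessel and that $\{g_{n}\}$ is an orthonormal basis are jointly strong enough to force it, via $L_{e}=\overline{M}\,L_{g}$.  If one prefers to avoid Theorems~\ref{th9}--\ref{th10} altogether, the same identity shows directly that $\overline{M}$ is a bounded bijection of $\ell^{2}(\mathbb{N})$, whence $L_{e}=\overline{M}\,L_{g}$ is a bounded bijection of $\mathbb{H}$ onto $\ell^{2}(\mathbb{N})$; this is precisely the assertion that $\{e_{n}\}$ is a Riesz basis, and combining it with effectiveness and the argument of Theorem~\ref{th10} again gives an orthonormal basis.
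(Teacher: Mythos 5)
Your proof is correct and follows essentially the same route as the paper: the forward direction via $M=C=I$ so that $g_{n}=e_{n}$, and the converse by noting that an orthonormal basis $\{g_{n}\}$ is both a $1$-tight frame and a Riesz basis, so Theorem~\ref{th1} gives effectiveness, Theorem~\ref{th9} gives that $\{e_{n}\}$ is a Riesz basis, and Theorem~\ref{th10} concludes. The one point where you go beyond the paper is in checking the hypothesis of Theorem~\ref{th9} that $M$ is bounded: the paper's proof invokes Theorem~\ref{th9} without comment, whereas you extract boundedness of $\overline{M}$ (hence of $M$) from the identity $L_{e}=\overline{M}\,L_{g}$, the unitarity of $L_{g}$, and the boundedness of $L_{e}$ coming from the Bessel assumption --- a legitimate and welcome verification, since boundedness of $M$ is not among the corollary's stated hypotheses.
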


\begin{proof}

If $\{e_{n}\}$ is an orthonormal basis, then it follows immediately from definitions that $\{g_{n}\}$ is also.  Conversely, if $\{g_{n}\}$ is an orthonormal basis, then it is a 1-tight frame and a Riesz basis.  Thus, by Theorem \ref{th1} and Theorem \ref{th9}, $\{e_{n}\}$ is a Riesz basis that is also effective. Theorem \ref{th10} implies that it is an orthonormal basis. 
\end{proof}

\section{Applications}\label{s:5}

Recall that in \cite{McC} McCormick extended the classical Kaczmarz algorithm in order to find a solution $x \in \ell^2(\mathbb{N})$ to the infinite system of linear algebraic equations $A x = b$, where $A$ is a bounded operator on $\ell^2(\mathbb{N})$ and $b\in$ Ran$(A)$.  He reduced the infinite dimensional problem to a sequence of finite dimensional ones by subiterating the Kaczmarz algorithm on a sequence of increasing, finite dimensional subspaces.  This frame theoretic approach gives a convergence estimate without any subiterations, when we assume some frame conditions on rows of the operator $A$ in $A x = b$.

Let $A$ be an infinite dimensional matrix with linearly dense rows $\{a_{n}\}_{n = 0}^\infty\subset \ell^2(\mathbb{N})$ and $b, x \in \ell^2(\mathbb{N})$ be column vectors satisfying $A x = b$.  Notice that for all $n \geq 0$, we have $\langle a_n^*, x \rangle = b_{n}$, and define $e_n := \frac{a_n^*}{\|a_n\|}.$
Let \begin{equation}\label{e6}x_{0} = b_{0}\frac{a_{0}}{\| a_{0}\|^2}\end{equation} be the initial guess, so that $$x_0 = \langle a_0^*, x\rangle \frac{a_0^*}{\|a_0\|^2} = \langle e_{0}, x\rangle e_{0}.$$  
Now for $n \geq 0$, recursively define $$x_{n + 1} :=  x_{n} + \frac{b_{n + 1} - \langle a_{n + 1}^*, x_n \rangle}{\|a_{n + 1}\|^{2}} a_{n + 1}^* $$ \begin{equation}\label{e31}= x_{n} + \frac{\langle a_{n + 1}^*, x - x_n \rangle}{\|a_{n + 1}\|^{2}}a_{n + 1}^* = x_{n} + \langle e_{n + 1}, x - x_{n} \rangle e_{n + 1}.\end{equation}  It is clear the theorems in Section \ref{s:3} and \ref{s:4} also hold for this definition of the Kaczmarz algorithm.  

We derive the following two corollaries from our results in Sections $\ref{s:3}$ and $\ref{s:4}$.  Corollary $\ref{cor6}$ gives some quantitative information concerning how close this iterative scheme will get to satisfying $A x = b$, using the appropriate initial guess $x_0 = b_0 e_0.$  Corollary $\ref{cor7}$ characterizes the bounded operators $A$ for which the Kaczmarz algorithm always converges to a solution.  

\begin{corollary}\label{cor6}
Let $A : \ell^2(\mathbb{N})  \rightarrow \ell^2(\mathbb{N}) $ be a bounded (matrix) operator whose row vectors $\{e_n\}_{n = 0}^{\infty}$ have norm 1 and form a frame with upper and lower frame bounds $A_2 \geq A_1 > 0$, respectively, and let $b\in Ran(A)$.   Suppose $C - I$ is compact, so that there is some $C_1 > 0$ be such that $C \geq C_1 I$.  Then for the initial guess $x_0 = b_0 e_0,$ we have $$\lim_{n\rightarrow\infty}\|A x_{n} - b\|^{2} \leq A_2 \frac{(1 - A_{1} C_1)}{A_1}\|b\|^{2}.$$ 
\end{corollary}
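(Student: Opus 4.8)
The plan is to push the Kwapie\'n--Mycielski energy identity (\ref{eqn6}) to its limit, bound the resulting ``loss term'' $\langle S_g x, x\rangle$ from below using the lower frame bound of $\{e_n\}$ together with the lower bound on $\overline{C}$ furnished by the compactness hypothesis, and then convert the estimate on $\|x - x_n\|$ into one on $\|A x_n - b\|$ via the \emph{upper} frame bound and into the control $\|x\|^2 \le \|b\|^2/A_1$ via the \emph{lower} one. First I would record the standing facts: the preceding computation identifies this iteration with (\ref{e20}) for the unit vectors $e_n = a_n^*/\|a_n\|$ and $x_0 = b_0 e_0 = \langle e_0, x\rangle e_0$, so the formulas (\ref{e13}) and (\ref{eqn6}) are available. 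Since the rows of $A$ are precisely the $e_n$, the frame inequality for $\{e_n\}$ reads $A_1\|y\|^2 \le \|A y\|^2 \le A_2\|y\|^2$ for every $y \in \ell^2(\mathbb{N})$; hence $A$ is bounded below, so it is injective with closed range, $A x = b$ has a unique solution $x$, and $A x_n - b = A(x_n - x)$. Finally, (\ref{eqn6}) shows at once that $\{g_n\}$ is Bessel (with bound $1$), so $x_n = \sum_{i=0}^n \langle x, g_i\rangle e_i$ converges in $\ell^2(\mathbb{N})$; consequently $\|A x_n - b\|^2$ converges and it suffices to bound its limit.

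Passing to the limit in (\ref{eqn6}) gives
$$\lim_{n\to\infty}\|x - x_n\|^2 = \|x\|^2 - \sum_{i=0}^{\infty}|\langle x, g_i\rangle|^2 = \|x\|^2 - \langle S_g x, x\rangle ,$$
and $\langle S_g x, x\rangle = \|L_g x\|^2 = \|\overline{C} L_e x\|^2$ because $L_g = \overline{C} L_e$. The compactness of $C - I$ now enters: since $C = M^{-1}$ is algebraically invertible and $C - I$ is strictly lower triangular and compact, hence quasinilpotent, the operator $\overline{C} = (C^t)^*$ is boundedly invertible and $\overline{C}^*\overline{C}$ is positive and invertible; this is the source of the constant $C_1 > 0$ with $\overline{C}^*\overline{C} \ge C_1 I$. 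Combining this with the lower frame bound of $\{e_n\}$ gives $\langle S_g x, x\rangle \ge C_1\|L_e x\|^2 \ge A_1 C_1\|x\|^2$, so that
$$\lim_{n\to\infty}\|x - x_n\|^2 \le (1 - A_1 C_1)\|x\|^2 ,$$
the right-hand side being nonnegative since (\ref{eqn6}) also forces $\langle S_g x, x\rangle \le \|x\|^2$.

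To conclude, apply the upper frame bound of $\{e_n\}$ to $y = x_n - x$, together with $\|b\|^2 = \|A x\|^2 \ge A_1\|x\|^2$:
\begin{align*}
\lim_{n\to\infty}\|A x_n - b\|^2 &= \lim_{n\to\infty}\|A(x_n - x)\|^2 \le A_2\lim_{n\to\infty}\|x_n - x\|^2\\
&\le A_2(1 - A_1 C_1)\|x\|^2 \le \frac{A_2(1 - A_1 C_1)}{A_1}\|b\|^2 .
\end{align*}
The step I expect to require the most care is the quantitative lower bound on $\overline{C}$: turning ``$C - I$ compact'' into ``$\overline{C}^*\overline{C} \ge C_1 I$'' requires knowing that $\overline{C}$ is bounded below, not merely injective, and one should check that this constant is the $C_1$ named in the statement. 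If preferred, this point can be outsourced to Theorem \ref{th8}: since $M$ is bounded, $\{g_n\}$ is a frame, and the displayed computation exhibits $A_1 C_1$ as a lower frame bound for it. Everything else is routine manipulation of the frame inequalities and of (\ref{eqn6}).
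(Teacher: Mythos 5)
Your proposal is correct and follows essentially the same route as the paper's proof: identify $A$ with the Bessel map $L_e$, use the compactness of $C-I$ to get the lower bound $A_1C_1$ on the frame operator of $\{g_n\}$ (hence the loss bound $1-A_1C_1$, which the paper obtains by citing Theorems \ref{th3} and \ref{th6} rather than expanding the identity (\ref{eqn6}) directly), then finish with the upper bound $A_2$ and $\|x\|^2\le \|b\|^2/A_1$. Your explicit justification that $C-I$ strictly lower triangular and compact is quasinilpotent, so $M$ is bounded and $\overline{C}$ is bounded below, is a welcome sharpening of the paper's brief appeal to the Fredholm alternative and of the statement's loosely phrased hypothesis $C\ge C_1 I$.
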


\begin{proof}
Because $b\in Ran(A)$, there exists $x\in \mathbb{H}$ such that $A x = b$.  Notice $A = L_e,$ the Bessel map for $\{e_n\}$, so $0 < A_1 I \leq A^*A \leq A_2 I$ is the frame operator of $\{e_n\}$.  Then \begin{equation}\label{e8}\lim_{n\rightarrow\infty}\|A x_n - b\|^{2} =  \lim_{n\rightarrow\infty}\langle A^*A (x_n - x), (x_n - x)\rangle^{2} \leq A_2 \lim_{n\rightarrow\infty}\|x_n - x\|^{2}.\end{equation}  Now $C - I$ compact implies $M$ is bounded (by the Fredholm alternative).  Then because $\{e_n\}$ is a frame, Theorem \ref{th6} shows it generates a frame $\{g_n\}$ with a lower frame bound $A_1 C_1 > 0$.  Finally, Theorem $\ref{th3}$ shows $\{e_n\}$ is almost effective with bound $1 - A_1 C_1$, so \begin{equation}\label{e14}(\ref{e8}) \leq A_2 (1 - A_1 C_1) \|x\|^{2}.\end{equation}  

Finally, because $A^*A \geq A_1 I$, we have $$\| b\|^{2} = \langle A x, A x\rangle = \langle A^* A x, x \rangle \geq A_{1} \|x\|^{2}.$$   Then combining this with $(\ref{e14})$ gives the result. 
\end{proof}

%Now we characterize the bounded operators $A$ for which the Kaczmarz algorithm always converges to a solution.  

\begin{corollary}\label{cor7}
Let $A : \ell^2(\mathbb{N}) \rightarrow \ell^2(\mathbb{N}) $ be a bounded (matrix) operator.  Then, for the initial guess (\ref{e6}), 
the Kaczmarz algorithm always converges to a solution if and only if $A$ is surjective with rows that form an orthogonal basis for $\ell^2(\mathbb{N})$. 
\end{corollary}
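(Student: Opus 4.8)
The plan is to reformulate the statement about the system $Ax=b$ in terms of the normalized rows $e_n=a_n^*/\|a_n\|$ and then feed it into Theorem~\ref{th10}. I read the hypothesis ``the Kaczmarz algorithm always converges to a solution'' as: for every $b\in\ell^2(\mathbb{N})$ the iterates $(x_n)$ of $(\ref{e6})$--$(\ref{e31})$ converge in $\ell^2(\mathbb{N})$ to some $x$ with $Ax=b$; in particular $\mathrm{Ran}(A)=\ell^2(\mathbb{N})$. I also keep the standing assumption of Section~\ref{s:5} that the rows $\{a_n\}$ are linearly dense, so $\overline{\textrm{span}}\{a_n^*\}=\ell^2(\mathbb{N})$ and hence $\Ker A=\{0\}$.

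The direction $(\Leftarrow)$ is short. If $A$ is surjective and its rows form an orthogonal basis, then $\{e_n\}$ is an orthonormal basis of $\ell^2(\mathbb{N})$, so (as in the Example of Section~\ref{s:2}) it is effective, and $A$ is bijective. Thus each $b$ has a unique solution $x$, and by $(\ref{e31})$ the iterates $(x_n)$ for that $b$ are exactly the Kwapie\'n--Mycielski iterates $(\ref{e20})$ for $x$, so $x_n\to x$.

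For $(\Rightarrow)$ I would proceed in three steps. First, surjectivity of $A$ is part of the hypothesis as read above. Second, I would show $\{e_n\}$ is effective: given any $x\in\ell^2(\mathbb{N})$, apply the hypothesis to $b:=Ax$; by $(\ref{e31})$ the Kaczmarz iterates for $b$ are the Kwapie\'n--Mycielski iterates for $x$, so they converge to some $x^*$ with $Ax^*=Ax$, and $\Ker A=\{0\}$ forces $x^*=x$. Third, I would upgrade $\{e_n\}$ to a Riesz basis: since the bounded operator $A$ is surjective, $A^*$ is bounded below, and because $A^*\delta_n$ is (up to coordinatewise conjugation) the $n$-th row $a_n^*$, this gives $\inf_n\|a_n\|>0$; together with $\sup_n\|a_n\|\le\|A\|$, the diagonal operator $D=\mathrm{diag}(\|a_n\|)$ is boundedly invertible, so the analysis operator of $\{e_n\}$ is (up to conjugation) $D^{-1}A$, hence bounded, injective by linear density, and surjective, i.e. a bounded bijection of $\ell^2(\mathbb{N})$. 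Being a bounded bijection it is bounded below, so $\{e_n\}$ is a frame whose analysis operator is onto $\ell^2(\mathbb{N})$, and therefore a Riesz basis. Now Theorem~\ref{th10} applies: an effective Riesz basis is an orthonormal basis. Hence the $e_n$ are pairwise orthogonal and complete, so the rows $a_n^*=\|a_n\|e_n$ form an orthogonal basis of $\ell^2(\mathbb{N})$, which with surjectivity of $A$ is the desired conclusion.

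The main obstacle is the passage from ``effective'' to ``Riesz basis'': effectiveness alone does not even give the frame property (by Proposition~\ref{l1} only \emph{Bessel} effective sequences are frames), so surjectivity of $A$ must really be used, and it is exactly what produces both the lower bound $\inf_n\|a_n\|>0$ and the surjectivity of the analysis operator. A minor technical point is that $M$ need not be bounded for a general Riesz basis, so this argument should invoke Theorem~\ref{th10} directly rather than Theorems~\ref{th8}--\ref{th9}; the conjugation conventions relating $\langle a_n^*,\cdot\rangle$ and $\langle\cdot,e_n\rangle$ are harmless and are handled exactly as in $(\ref{e31})$.
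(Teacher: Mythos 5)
Your proposal is correct and shares the paper's overall skeleton: for ($\Leftarrow$) both reduce to the fact that an orthonormal basis $\{e_n\}$ is effective, and for ($\Rightarrow$) both culminate in Theorem \ref{th10} after showing that $\{e_n\}$ is an effective Riesz basis. The mechanics of the ($\Rightarrow$) direction, however, are genuinely different. The paper argues: effectiveness plus Proposition \ref{l1} gives that $\{e_n\}$ is a frame, hence $L_e$ and then $A$ are injective; the inverse mapping theorem makes $A$ and $A^*$ isomorphisms, so $\{a_n^*\}=\{A^*\delta_n\}$ is a Riesz basis, whence $\inf_n\|a_n\|>0$ and the normalized sequence $\{e_n\}$ is a basis and a frame, i.e.\ a Riesz basis. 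You instead pull $\inf_n\|a_n\|>0$ directly out of surjectivity of $A$ (via $A^*$ bounded below) and identify the analysis operator of $\{e_n\}$ with $D^{-1}A$ up to conjugation, a bounded bijection, so $\{e_n\}$ is a frame with surjective analysis operator and hence a Riesz basis. Your ordering buys a real improvement: Proposition \ref{l1} requires $\{e_n\}$ to be Bessel, which is not known until $\inf_n\|a_n\|>0$ is in hand, so the paper's early appeal to it is premature, whereas your argument never needs it. You are also more careful about the step ``always converges to a solution $\Rightarrow$ effective'': this needs uniqueness of solutions, i.e.\ injectivity of $A$, which you supply from the standing Section \ref{s:5} assumption that the rows are linearly dense, while the paper asserts effectiveness first and only derives injectivity afterwards. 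Making that assumption explicit is a feature rather than a defect: it is genuinely needed, since the left shift (rows $\delta_{n+1}$) is bounded and surjective and its Kaczmarz iterates converge to a solution for every $b$, yet its rows are not a basis of $\ell^2(\mathbb{N})$.
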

\begin{proof}

Let $\{a_n\}_{n = 0}^{\infty} \subset \ell^2(\mathbb{N})$, $b = (b_n)_{n = 0}^{\infty} \in \ell^2(\mathbb{N})$, and for all $n \geq 0$, define $e_n := \frac{a_n^*}{\|a_n\|^2}$.  

($\Leftarrow$)

Let $b\in \ell^2(\mathbb{N})$.  Because $A$ is surjective, there exists $x\in \ell^2(\mathbb{N})$ such that $A x = b$.  Because $\{a_n\}$ is an orthogonal basis, we know $\{e_n\}$ is an orthonormal basis.  In particular, it generates the sequence $\{g_n = e_n\}_{n = 0}^{\infty}$ by formula $(\ref{e2})$, which is a 1-tight frame, so Theorem $\ref{th1}$ proves $\{e_n\}$ is effective.  Then for the initial guess $x_0 = b_0 \frac{e_0}{\|a_0\|}$, formula $(\ref{e31})$ shows $$\lim_{n\rightarrow\infty} \|x_n - x\| = 0. $$  

($\Rightarrow$)

We know $A$ is surjective by the hypothesis that the Kaczmarz algorithm always converges to a solution. 
This same hypothesis also implies $\{e_n = \frac{a_n^*}{\|a_n\|}\}_{n = 0}^{\infty}$ is effective, so Proposition $\ref{l1}$ implies $\{e_n\}$ is a frame.  Hence, the Bessel map $L_e$ is injective, so $$0 = \langle e_n, x\rangle = \langle \frac{a_n^*}{\|a_n\|}, x\rangle = \frac{1}{\|a_n\|} \langle a_n^*, x\rangle$$ for all $n \in \mathbb{N}$ implies $x = 0$.   Observe $A$ is bounded, so $\{a_n\}$ is a Bessel sequence, and we now also have that $A$ is injective.  Then the inverse mapping theorem implies $A$ is an isomorphism, and therefore, $A^*$ is an isomorphism as well.

As $A^*$ is an injective, bounded operator into $\ell^2(\mathbb{N})$, it is a Bessel map for the frame operator $A A^*$.  And because $A^*$ is surjective, it follows that $\{a_n^*\}$ is a Riesz basis.  Therefore, there is an isomorphism $K : \ell^2(\mathbb{N}) \rightarrow \ell^2(\mathbb{N})$ and an orthonormal basis $\{k_n\}_{n \in \mathbb{N}}\subset \ell^2(\mathbb{N})$ such that $K k_n = a_n$ for all $n$.  This implies $\{\|a_n\|\}_{n = 0}^{\infty}$ is bounded below.  

Let $(c_n) \in \ell^2(\mathbb{N})$ and suppose $$0 = \sum_{n = 0}^{\infty} c_n e_n = \sum_{n = 0}^{\infty} \frac{c_n}{\|a_n\|} a_n^*.$$  Because $\{\|a_n\|\}$ is bounded below, we have $(\frac{c_n}{\|a_n\|}) \in \ell^2(\mathbb{N}),$ and as $\{a_n^*\}$ is a basis, we conclude $\frac{c_n}{\|a_n\|} = 0$ for all $n$, which implies $c_n = 0$ for all $n$.  Hence, $\{e_n\}$ is a basis, and because it is also a frame we conclude $\{e_n\}$ is a Riesz basis.  So Theorem $\ref{th10}$ shows $\{e_n\}$ is an orthonormal basis. 
\end{proof}

\subsection*{Acknowledgements}
This research was partly supported by NSF (grant no. CBET 0854233) and ONR (grant no. N000140910144).

%\appendix

%\section*{References}

\end{document}